\newtheorem{theorem}{Theorem}
\newtheorem{definition}{Definition}
\renewcommand{\paragraph}{\@startsection{paragraph}{4}{\z@}%
	{-2.00ex\@plus -1ex \@minus -.2ex}%
	{0.5ex \@plus .2ex}%
	{\normalfont\normalsize}}
\def\BibTeX{{\rm B\kern-.05em{\sc i\kern-.025em b}\kern-.08em
    T\kern-.1667em\lower.7ex\hbox{E}\kern-.125emX}}
\newcommand\semiHuge{\@setfontsize\semiHuge{22.72}{27.38}}
\renewcommand{\paragraph}{\@startsection{paragraph}{4}{\z@}%
	{-2.00ex\@plus -1ex \@minus -.2ex}%
	{0.5ex \@plus .2ex}%
	{\normalfont\normalsize}}
\begin{document}

\title{{Reinforcement
	Learning for Interference Avoidance Game in RF-Powered  Backscatter Communications}\thanks{This work was supported in part by the National Science Foundation under grants ECCS-1444009 and CNS-1824518.}  }
	
\author{\IEEEauthorblockN{Ali Rahmati, and Huaiyu Dai}
\IEEEauthorblockA{{$^*$Department of Electrical and Computer Engineering}, 
{NC State University,}
Raleigh, NC, US \\
Email:\texttt{ \{arahmat@ncsu.edu, hdai@ncsu.edu}\}}
}


\maketitle

\begin{abstract}
RF-powered backscatter communication is a promising new technology that can be deployed for battery-free applications such as internet of things (IoT) and wireless sensor networks (WSN). However, since this kind of communication is based on the ambient RF signals and battery-free devices, they are vulnerable to interference and jamming. In this paper, we model the interaction between the user and a smart interferer in an ambient backscatter communication network as a game. We design the utility functions of both the user and interferer in which the backscattering time is taken into the account. The convexity of both sub-game optimization problems is proved and the closed-form expression for the equilibrium of the Stackelberg game is obtained.  Due to lack of information about the system $\textrm{SNR}$ and transmission strategy of the interferer, the optimal strategy is obtained using the  Q-learning algorithm in a dynamic iterative manner. We further introduce hotbooting Q-learning as an effective approach to expedite the convergence of the traditional Q-learning. Simulation results show that our approach can obtain considerable performance improvement in comparison to random and fixed backscattering time transmission strategies and improves the convergence speed of Q-Learning by about~$31\%$.
\end{abstract}

\begin{IEEEkeywords}
Ambient backscatter communication, interference avoidance, Stackelberg game, Q-learning.
\end{IEEEkeywords}
\IEEEpeerreviewmaketitle

\section{Introduction}
\noindent Ambient backscatter communication has been introduced recently, which can provide the possibility of communication between two nodes using available ambient RF signals without using any active radio transmission \cite{lu2015wireless,van2018ambient}. Thus, it can be a promising technology for beyond 5G \cite{parvez2018survey} which is low-cost and energy efficient  for battery-free applications such as internet of things (IoT), drones~\cite{rahmati2019dynamic}, MIMO communication \cite{yang2016physical, amiri2018extremely} and sensor networks. In such networks, the user tags can harvest energy from the ambient signals and transmit the information to the receiver over ambient RF carriers without using any  dedicated signals.

However, one of the drawbacks of such systems is that they are sensitive to the interference and jamming due to poor quality of the backscattered signal. Thus, in the presence of the smart interferer, the user should try to choose the best strategy for the transmission. However, in practical scenarios, the user does not have any information about the state of the system and smart interferer strategy. As a result, a dynamic scenario should be considered in which the user can learn its transmission strategy via trial-and-error in a dynamic manner.
This paper is prompted to tackle  such an issue.

In the literature, there are several papers that considered the time allocation for the backscattering in hybrid wireless powered systems. For instance, in \cite{hoang2017overlay}, a game theoretic formulation for RF-powered backscatter cognitive radio networks is presented in which the secondary user chooses its backscattering time and the gateway charges the user accordingly. In \cite{hoang2017optimal}, the optimal time sharing in a RF-powered backscatter cognitive radio network is investigated in order to maximize the overall network throughput. In \cite{thai2016tradeoff}, the integration of the ambient backscatter and a cognitive radio network is introduced. The trade off between the harvest-then-transmit and backscatter communication is investigated with the goal of maximizing the overall transmission rate.
Thus, backscattering time plays a critical role in such systems. 

On the other hand, game theory and Q-learning are used for interference avoidance scenarios in the literature to design the optimal power allocation for the players. In order to design the utility function in ambient backscatter systems, the backscattering and energy harvesting time should be considered as they affect the number of transmission bits considerably. As a result, assuming the signal-to-interference-plus-noise ratio (SINR) as the utility of the players can not describe the scenario well enough. In this paper, we define the utility of the user and the smart interferer as the number of bits and prove that both the user and the smart interferer sub-games are convex optimization problems, and the best strategy of each is obtained. 
Moreover, a scenario including the  dynamic interaction  between the user and the smart interferer based on ambient backscatter communication is  considered, which is not studied in the literature. In \cite{8219367}, the authors    formulated  such an interaction  as a Stackelberg game in which  the players are aware of the system parameters. However, in practice, the optimal performance depends on the system state and jamming parameters which are challenging for the user to accurately estimate.  Thus, a more practical scheme is needed in which the system can learn the best strategy throughout the time.

The main contributions of the paper are as follows:

\begin{itemize}
\item An interference avoidance game for the ambient backscatter communication is modeled, in which the user selects its portion of time for backscattering and the smart interferer chooses its transmission power.

\item By taking the backscattering time into account, the utility of the both the user and the smart interferer is defined as the number of transmitted bits. The convexity of the optimization problems for each sub-game is proved and the best strategy for each of them is obtained.
\item Due to uncertainty of the system state and smart interferer transmission policy, both user and the smart interferer 
can deploy Q-learning algorithm   in order to obtain the optimal policy for the dynamic game.

\item We adopted the hotbooting Q-learning to further improve the learning speed of the algorithm. 

\end{itemize}
The rest of the paper is organized as follows. In Section \ref{sec2}, the system model and game formulation are presented.
In Section \ref{sec3}, in order to achieve the optimal policy of the proposed dynamic game,  the Q-learning algorithm is discussed. Hotbooting Q-learning is applied to the scenario to accelerate the learning process. Section \ref{sim} investigates the performance of our scheme via simulation results, and finally Section \ref{con} concludes the paper.
\section{System Model and Problem Formulation}\label{sec2}
\subsection{System Model}
\noindent We consider an RF-powered backscatter network in which there is a user in the system  equipped with an RF energy harvesting module and a backscatter circuit to harvest energy from the available RF signal transmitted by the  hybrid access point (HAP) or backscatter the transmitted signal to the HAP. There is a smart interferer that can learn the  transmission strategy of the  user and respond accordingly during a sequence of time slots. The smart interferer aims at decreasing the number of backscattered bits by the user. It is assumed that the time is divided into equal slots as $n \in \{1, 2, \dots, T\}$. The transmission happens in a slotted manner and we assume that channel parameters are fixed during each slot $n$. Let us assume the transmission in a normalized time slot with the duration of 1.  The time portion allocated to backscatter communication at slot $n$ is denoted by $\phi_n$ and the rest of the time, i.e., $1-\phi_n$ is allocated for energy harvesting. The received RF power from the HAP to the user in the free space can be written as follows:
\begin{equation}
P^\textrm{R}=\delta P^{\textrm{HAP}}\frac{G^TG^R \lambda^2_\textrm{HAP}}{(4\pi d_\textrm{HAP})^2},
\end{equation}
where  $P^{\textrm{HAP}}$ is the transmit power of the HAP, $\delta$ is the energy harvesting efficiency, $G^T$ is the antenna gain of the HAP and $G^R$ is the antenna gain of
the user, $\lambda_\textrm{HAP}$ is the wavelength of the signal transmitted from the HAP, and
$d_\textrm{HAP}$ is the distance between the HAP and the user tag  \cite{talla2013hybrid}.
Thus, we can derive the total amount of harvested energy at the user in time slot $n$ as:
\begin{equation}
E^\textrm{h}_n=(1-\phi_n) P^{\textrm{R}}.
\end{equation}
After energy harvesting in the first phase, the user exploits all of the energy to send its own data to the receiver during the backscatter time, i.e., $\phi_n$. Therefore, the average transmit power of the user at slot $n$ is given by:
\begin{equation}
P^\textrm{T}_n=\frac{E^\textrm{h}_n}{\phi_n}.
\end{equation}
\begin{figure}[!t]
\vspace{3mm}
	\includegraphics[width=7cm,height=5cm]{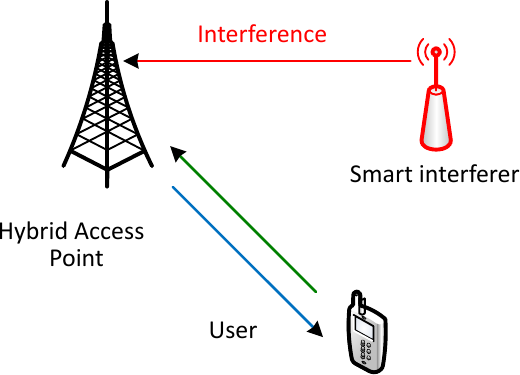}
	\centering
	\caption{System model.}
	\label{fig:1}
\end{figure}
\noindent As a result the received $\textrm{SINR}_n$ at the HAP  at time slot $n$ can be written as:
\begin{equation}
\textrm{SINR}_n=\frac{P^\textrm{T}_n h |\Gamma_0-\Gamma_1|^2}{p_n^Jg+N_0},
\end{equation}
where  $\Gamma_1$ and  $\Gamma_0$ are the reflection coefficients, $N_0$ is the noise power, $p_n^{J}$ is the jamming  power, while $h={G^TG^R \lambda^2_\textrm{HAP}}/{(4\pi d_\textrm{HAP})^2}$ and $g={G^JG^T \lambda^2_J}/{(4\pi d_J)^2}$ are the channel gain from the user to HAP and from the smart interferer to HAP, respectively.
Here, $G^J$ is the antenna gain of the  smart interferer, $\lambda_J$ is the
wavelength of the signal transmitted from the smart interferer, and
$d_J$ is the distance from the smart interferer to the  user. In backscatter communications systems, a
two-state modulation is typically used because of its simplicity. This can be implemented using load modulation \cite{van2018ambient}.
By changing the antenna impedance, the reflection coefficient can be shifted between the
absorbing and reflecting states, respectively. If the impedance is set for the absorbing state, the impedance is  matched and the RF signal can be absorbed. In contrast, if the impedance is not matched, the RF signal is backscattered and can be assumed as bit `1' in the receiver.
Having the $\textrm{SINR}_n$, the backscattering transmit data rate in time slot $n$ can be obtained as:
\begin{equation}
r^B_n=\kappa W \log_2 (1+\textrm{SINR}_n),
\end{equation}
where $\kappa \in [0,1]$ is the transmission efficiency, and $W$ is the channel bandwidth.
Based on this, the number of transmitted bits during the backscattering time $\phi_n$ in time slot $n$ is given by:
\begin{align}\label{qq111}
R^B_n &=  \phi_n\kappa W \log_2 \left(1+\textrm{SINR}_n\right)\\ \nonumber
&=\phi_n\kappa W \log_2 \left(1+\frac{(1-\phi_n) h |\Gamma_0-\Gamma_1|^2}{\phi_n(p_n^Jg+N_0)}\right).
\end{align}
As can be seen in \eqref{qq111}, considering the number of transmitted bits as a part of the utility function can make the problem more challenging due to the multiplication of the $\phi_n$ to the transmit data rate.

The interactions between the smart interferer and the user can be formulated as an  interference avoidance game for ambient backscatter communications. In the next section, the game formulation is presented. 
\subsection{Game Formulation}

The dynamic interference avoidance Stackelberg game between the smart interferer and the user is formulated as a game denoted by $\mathbb{G}$, in which  the user as the leader chooses   the portion of time for backscattering in order to improve its utility, and  the smart interferer as a follower chooses its jamming power to reduce the utility of the user as much as possible. The utility of the user denoted by $U_U$  depends on the ambient backscattered number of transmitted bits and the cost of backscattering offered by HAP as a gateway \cite{hoang2017overlay},  which is given by
\begin{equation}\label{eqw}
U_{{U}}(\phi_n, {p^J_n})=\phi_n\kappa W \log_2 \left(1+\textrm{SINR}_n\right)-C_\phi \phi_n,
\end{equation}
where $C_\phi$ is the price charged by the HAP for the backscattering time. The utility of the smart interferer denoted by $U_J$ can be defined as
\begin{equation}\label{eqq3}
U_{{J}}(\phi_n, {p^J_n})=-\phi_n\kappa W \log_2 \left(1+\textrm{SINR}_n\right)-C_J p_n^J,
\end{equation}
where $C_J$ is the smart interferer unit price for power transmission.
In the game, the user   chooses its strategy in order to maximize the utility given in \eqref{eqw}, while  the smart interferer aims to maximize the  utility given in \eqref{eqq3}, which is equivalent to minimizing the backscattered number of transmission bits by the user subject to the jamming cost. In summary, in time slot $n$, the game can be formulated as $\mathbb{G}=\langle\{U,J\}, \{\phi_n,p^J_n\},\{U_U,U_J\}\}\rangle$, in which the user chooses its backscattering time portion $\phi_n$ in each time slot, while the smart interferer chooses its transmission power. The Stackelberg game  can be formulated as:
\begin{equation}
p^{J*}_n({\phi_n})= \arg ~\max_{0\le p_n^J \le p^{J,\max}_n} ~ U_J(\phi_n, p^J_n),
\end{equation}

\begin{equation}
{\phi}^*_n= \arg ~\max_{0\le \phi_n \le 1} ~ U_U(\phi_n, p^{J*}_n({\phi_n})).
\end{equation}In the following, the definition of the Stackelberg equilibrium is given. Then, the equilibrium of the proposed game is obtained.
\subsection{Stackelberg Equilibrium (SE)}
In what follows, the SE of the game is obtained through backward induction.
\begin{definition}

The SE of the proposed game $\mathcal{G}$ in time slot $n$ is denoted by the policy $(\phi^{*},p^{J*}_n)$, for which  the following conditions hold
\begin{equation}
U_U(\phi^*_n, {p^{J*}_n}) \ge U_U(\phi_n, {p^{J*}_n}),~ \forall \phi_n \in [0,1],
\end{equation}
\begin{equation}
U_J(\phi^*_n, {p^{J*}_n}) \ge U_J(\phi^*_n, {p^{J}_n}),~ \forall {p_n^{J}}, \in [0,p_n^{J,\textrm{max}}].
\end{equation}
\end{definition}
In order to find the SE of the game, we can find its sub-game perfect Nash Equilibrium (NE). NE
is defined as the  point(s) at which no player can
improve its utility by changing its strategy unilaterally, while
other players' strategy is not changed \cite{rahmati2015price},~\cite{rahmati2016price}. To do this, the well-known backward induction method is employed \cite{fudenberg1991game,rahmati2017price}. In this method, at first, given the leader's strategy, the best response of the follower is obtained by solving the follower's sub-game. Then, substituting the best response of the follower, the best response of the leader is obtained.
\begin{theorem}
Given the backscattering time $\phi_n$, there exists a unique optimal solution for the smart interferer's sub-game, and  the best response strategy  for the SE  game at the smart interferer side is given by:
\begin{equation}\label{ww}
p^{J*}_n=\dfrac{\sqrt{\ln\left(2\right)}\sqrt{\Delta_n}+\ln\left(2\right)C_JD_n}{2\ln\left(2\right)C_J \phi_n g}
\end{equation}
where $D_n=(\phi_n-C_J)h |\Gamma_0-\Gamma_1|^2-2\phi_n N_0$ and    $\Delta_n=\ln\left(2\right)C_J^2(\phi_n-1)^2h^2|\Gamma_0-\Gamma_1|^2+4C_J\phi_n^2\left(1-\phi_n\right)gh\kappa W|\Gamma_0-\Gamma_1|^2$.

\end{theorem}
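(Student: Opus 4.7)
The plan is to prove the theorem in two steps: first establish strict concavity of $U_J(\phi_n,\cdot)$ in $p_n^J$, which immediately yields existence and uniqueness of the interferer's best response; then derive the closed-form expression by solving the resulting first-order optimality condition, which reduces to a quadratic in $p_n^J$.

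For concavity, fix $\phi_n\in(0,1]$ and substitute the $\textrm{SINR}$ from \eqref{qq111} into \eqref{eqq3}. The only nonlinear dependence on $p_n^J$ enters through the term $-\phi_n\kappa W\log_2\!\bigl(1+a/(p_n^J g+N_0)\bigr)$, where $a:=(1-\phi_n)h|\Gamma_0-\Gamma_1|^2/\phi_n>0$. A direct second-derivative computation shows that $x\mapsto\log(1+a/x)$ is strictly convex on $x>0$ whenever $a>0$, so its negation is strictly concave; composing with the affine, strictly increasing map $p_n^J\mapsto p_n^J g+N_0$ preserves strict concavity, and adding the linear cost $-C_J p_n^J$ leaves $U_J(\phi_n,\cdot)$ strictly concave on the convex domain $[0,p_n^{J,\max}]$. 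Consequently, the maximizer is unique.

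To extract the closed form, I would set $\partial U_J/\partial p_n^J=0$. Using the identity $\frac{\partial}{\partial p_n^J}\log_2\!\bigl(1+a/(p_n^J g+N_0)\bigr)=-ag/\bigl[\ln 2\cdot(p_n^J g+N_0)(p_n^J g+N_0+a)\bigr]$, the stationarity condition rearranges, after substituting for $a$ and clearing denominators, to a quadratic in $y:=p_n^J g+N_0$ of the form
\[
\phi_n y^2+(1-\phi_n)h|\Gamma_0-\Gamma_1|^2\,y-\frac{\phi_n(1-\phi_n)\kappa W gh|\Gamma_0-\Gamma_1|^2}{C_J\ln 2}=0.
\]
Applying the quadratic formula and retaining only the nonnegative root (the alternative root violates $y>0$), then back-substituting $p_n^J=(y-N_0)/g$, produces the stated expression after multiplying numerator and denominator by $\sqrt{C_J\ln 2}$ and grouping the constant terms into $D_n$ and the discriminant into $\Delta_n$. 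Strict concavity then certifies that this stationary point, provided it lies in $[0,p_n^{J,\max}]$, is the unique global maximizer.

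The primary obstacle is algebraic bookkeeping rather than ideas: matching the paper's precise packaging of $D_n$ and $\Delta_n$ requires careful tracking of the $\ln 2$, $C_J$, and $\phi_n$ multiplicative factors produced by the substitution $a=(1-\phi_n)h|\Gamma_0-\Gamma_1|^2/\phi_n$, and verifying that only the ``$+$'' branch of the square root yields a feasible (nonnegative) interferer power. A secondary issue worth flagging is the boundary case: if the unconstrained stationary point exceeds $p_n^{J,\max}$, the constrained optimum is attained at the boundary, and the stated closed form implicitly treats only the interior regime.
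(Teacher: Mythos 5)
Your proposal follows essentially the same route as the paper: establish concavity of $U_J$ in $p_n^J$ for fixed $\phi_n$ (the paper does this by exhibiting the second derivative explicitly, you by composition of a convex map with an affine one — a cosmetic difference), then reduce the first-order condition to a quadratic and retain the positive root. Your quadratic in $y=p_n^Jg+N_0$ is equivalent to the paper's equation \eqref{quad} after expanding and multiplying through by $C_J\ln 2$ (indeed your constant term correctly carries an $N_0$ factor that appears to have been dropped in the paper's $C_n$), and your remark about the boundary regime $p_n^{J*}>p_n^{J,\max}$ is a legitimate caveat the paper also leaves implicit.
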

\begin{proof}
In order to investigate the concavity of the smart interferer's objective function, the second derivative of the smart interferer's utility function is obtained as:
\begin{align}\nonumber
\frac{\partial^2 U_J(\phi_n,p^J_n)}{\partial (p_n^J)^2  }&=\\&\hspace{-14mm}-\frac{(1-\phi_n)\phi h\kappa W |\Gamma_0-\Gamma_1|^2(2\phi_n^Jg p_n^J+B_n)}{\ln(2)(p_n^J g+N_0)^2(\phi_n^Jg+L_n)},
\end{align}
in which $V_n=(1-\phi_n^J)h|\Gamma_0-\Gamma_1|^2+\phi_nN_0$ and $L_n=(1-\phi_n^J)h|\Gamma_0-\Gamma_1|^2+2\phi_nN_0$. Considering $0 \le \phi_n \le 1 $, the second derivative of the smart interferer's utility function is negative, i.e., $\frac{\partial^2 U_J(\phi_n,p^J_n)}{\partial (p_n^J)^2  }\le 0$, which leads to concavity of the smart interferer sub-game. As a result, given the $\phi_n$, the optimal solution for the smart interferer's sub-game is available. By setting the first derivative of \eqref{eqq3} equal to zero, one can get the following quadratic equation:
\begin{equation}\label{quad}
A_n(p_n^J)^2+B_np_n^J+C_n=0,
\end{equation}
where $A_n=\ln(2) \phi_n C_J g^2, B_n=2\ln(2) \phi_n C_J N_0 g+\ln(2)(1- \phi_n) g C_J h|\Gamma_0-\Gamma_1|^2, C_n=\ln(2) \phi_n C_J N_0^2+\ln(2) (1-\phi_n) C_Jh|\Gamma_0-\Gamma_1|^2- \phi_n(1-\phi_n)gh\kappa W |\Gamma_0-\Gamma_1|^2$. By solving the quadratic equation in \eqref{quad} and choosing the positive root,
one can get the optimal smart interferer's transmission power as in \eqref{ww}. It can be observed that for the given range of $0 \le \phi_n \le 1 $, we can get $\Delta_n \ge 0$ which leads to the real solution.
\end{proof}
Assuming the user as the leader of the game, it can obtain its best strategy given the follower's action using the following Theorem.
\begin{theorem}
Given the smart interferer's action, there exists a unique optimal solution for the user's sub-game, and  the best response strategy  for the SE  game at the user side can be obtained in a numerical manner.
\end{theorem}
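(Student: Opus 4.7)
The plan is to parallel the proof of Theorem~1 on the leader's side: first establish strict concavity of the user's utility in $\phi_n$ when the follower's power is held fixed, then substitute the follower's best response $p_n^{J*}(\phi_n)$ from \eqref{ww}, and finally argue that the resulting scalar first-order condition must be solved numerically because it is transcendental.

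First I would set $\alpha_n := h|\Gamma_0-\Gamma_1|^2/(p_n^J g + N_0) > 0$ and rewrite the backscatter term in \eqref{eqw} as $\phi_n\kappa W \log_2(1+\alpha_n(1-\phi_n)/\phi_n) = \kappa W\bigl[\phi_n\log_2(\phi_n(1-\alpha_n)+\alpha_n)-\phi_n\log_2\phi_n\bigr]$. A direct calculation of the second derivative with respect to $\phi_n$ (case-splitting on whether $\alpha_n<1$ or $\alpha_n\ge 1$) shows it is strictly negative on $(0,1)$; the affine cost $-C_\phi\phi_n$ contributes nothing. Hence $U_U(\cdot,p_n^J)$ is strictly concave in $\phi_n$, so a unique maximizer $\hat\phi_n(p_n^J)\in[0,1]$ exists for each admissible jamming power.

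To pass to the SE, I would substitute $p_n^{J*}(\phi_n)$ from \eqref{ww} and form $\tilde U_U(\phi_n) := U_U(\phi_n, p_n^{J*}(\phi_n))$. Applying the chain rule,
\[
\frac{d \tilde U_U}{d\phi_n} = \frac{\partial U_U}{\partial \phi_n}\bigg|_{p_n^{J*}} + \frac{\partial U_U}{\partial p_n^J}\bigg|_{p_n^{J*}} \frac{dp_n^{J*}}{d\phi_n},
\]
where $dp_n^{J*}/d\phi_n$ follows from implicit differentiation of the quadratic \eqref{quad}. Setting this expression to zero yields a transcendental equation mixing the square-root term from \eqref{ww} with a logarithm, so no closed form is available. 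Because $\tilde U_U$ is continuous on $[0,1]$ with $\tilde U_U(0)=0$ and $\tilde U_U(1)=-C_\phi$, and because sub-game strict concavity combined with the monotone behavior of $p_n^{J*}$ in $\phi_n$ renders $\tilde U_U$ unimodal on $(0,1)$, the maximizer $\phi_n^{*}$ is unique and is located by a one-dimensional numerical routine (bisection or Newton's method applied to the first-order condition).

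The main obstacle is rigorously ruling out multiple critical points of $\tilde U_U$ once the follower's response has been substituted: the induced $\phi_n$-dependence of $\alpha_n$ through $p_n^{J*}$ could, a priori, corrupt the concavity inherited from the fixed-$p_n^J$ analysis. My fallback would be to bound $|dp_n^{J*}/d\phi_n|$ directly from \eqref{ww} and show that the resulting cross-term is dominated by the strictly negative fixed-power second-derivative contribution; failing that, unimodality plus a numerical sweep over the parameter ranges of interest is the practical substitute invoked implicitly by the paper's appeal to a numerical solver.
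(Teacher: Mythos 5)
Your first half is exactly the paper's argument: the paper also proves the theorem by computing $\partial^2 U_U/\partial\phi_n^2$ for a \emph{fixed} $p_n^J$, obtaining the strictly negative expression \eqref{ssaa} (your algebraic simplification, with $\alpha_n = h|\Gamma_0-\Gamma_1|^2/(p_n^J g+N_0)$ and second derivative proportional to $-\alpha_n^2/(\phi_n(\phi_n(1-\alpha_n)+\alpha_n)^2)$, is the same quantity and needs no case split on $\alpha_n$ since the numerator is $-\alpha_n^2$ identically), and then setting the first derivative of \eqref{eqw} to zero and observing that the resulting mix of a logarithm and a rational term has no closed form, hence Newton's method. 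Since the theorem is stated as ``given the smart interferer's action,'' this fixed-$p_n^J$ concavity argument already delivers everything claimed, and on that portion your proposal and the paper coincide. Where you diverge is your second half: you substitute the follower's reaction $p_n^{J*}(\phi_n)$ from \eqref{ww} into $U_U$ and try to show the composite $\tilde U_U$ has a unique critical point. The paper does not do this at all --- its first-order condition treats $p_n^J$ as an exogenous constant --- so the ``obstacle'' you flag (concavity is not automatically inherited after composing with $p_n^{J*}(\phi_n)$, whose derivative enters through the chain rule) is a gap in the paper's backward-induction story rather than in your proof of the stated theorem. Your instinct is right that a complete Stackelberg analysis would need to bound the cross-term $\left(\partial U_U/\partial p_n^J\right)\left(dp_n^{J*}/d\phi_n\right)$ obtained by implicit differentiation of \eqref{quad}, or at least establish unimodality of $\tilde U_U$; neither you nor the paper closes that step, but it is not required for the theorem as written.
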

\begin{proof}
In order to investigate the concavity of the user's sub-game, the second derivative of the user's utility function is obtained as:
\begin{equation}\label{ssaa}
\frac{\partial^2 U_U(\phi_n,p^J_n)}{\partial \phi_n^2  }=-\dfrac{h^2\kappa W |\Gamma_0-\Gamma_1|^2}{\ln\left(2\right)\phi_n H_n^2},
\end{equation}
where $H_n=\left(h |\Gamma_0-\Gamma_1|^2 -gp^J_n-n\right)\phi-h|\Gamma_0-\Gamma_1|^2$. Based on \eqref{ssaa}, we can conclude that ${\partial^2 U_U(\phi_n,p^J_n)}/{\partial \phi_n^2  }\le 0$, which indicates that the user's utility function is concave with respect to $\phi_n$.  Hence, given  $p_n^J$, the optimal solution for the user's sub-game is available. By setting the first derivative of \eqref{eqw} equal to zero,  the optimal portion of backscattering time $\phi_n^*$ can be achieved by solving the following equation:
\begin{align}\nonumber
 \frac{\partial U_U(\phi_n,p^J_n)}{\partial \phi_n  }\bigg|_{\phi_n=\phi^*_n}=\dfrac{\kappa W\ln\left(\frac{h|\Gamma_0-\Gamma_1|^2\left(1-\phi_n^*\right)}{\left(g p^J_n +N_0\right)\phi_n^*}+1\right)}{\ln\left(2\right)}\\ \nonumber
\hspace{-0.2cm}+\dfrac{\kappa W\left(-h|\Gamma_0-\Gamma_1|^2\phi_n^*-h|\Gamma_0-\Gamma_1|^2\left(1-\phi_n^*\right)\right)}{\ln\left(2\right)\left(\left(g p^J_n+N_0\right)\phi_n^*+h|\Gamma_0-\Gamma_1|^2\left(1-\phi_n^*\right)\right)}-C_\phi =0.
\end{align}
However, due to non-linearity of the obtained equation, the closed form expression is not available, and the optimal solution can be obtained numerically using well-known methods such as Newton's method~\cite{atkinson2008introduction}.
\end{proof}

The best responses of the both sub-games with respect to smart interferer's transmission power and the portion of backscattering time  depends on
the distance of the the smart interferer to the HAP, the  channel states and the parameters of the smart interferer in the time slot, which are not straightforward for the user to accurately estimate. 
In practice, since the smart interferer and the user are not aware of the interference power and channel gain, computation of the best responses is challenging. One approach to deal with this problem is to model their interaction as  a repeated dynamic anti-jamming  transmission
game and resort to learning algorithms to approach the optimal policies \cite{hamidouche2018collaborative, kasgari2018dynamic}. 

\section{Dynamic Interference Avoidance Game for Backscatter Communication}\label{sec3}
\noindent In dynamic  environments,  optimal strategy selection depends on the radio channel state and jamming parameters, which are challenging for the user to accurately estimate. Thus, a more practical scheme is needed in which the system can learn the best strategy through the time.  Here, the repeated interaction between the user and the smart interferer can be considered as a dynamic game, in which the user chooses its portion of time for backscattering according to the system state and history of the smart interferer transmission, and the smart interferer acts based on the previous backscatter transmission information. The user interaction in  the proposed game can be considered as a learning-based game and the user can deploy reinforcement learning to obtain the best strategy \cite{xiao2018two, kasgari2018dynamic}.

The feasible set for the backscattering time of the user  is quantized into $K+1$ levels, i.e., $\phi_n^k \in [\phi^{\textrm{max}}k/K]_{0\le k\le K} = \Phi$ where $\phi^{\textrm{max}}$ is the maximum portion of time for backscattering. Similarly, for the transmit power of the smart interferer, we have  $p_n^J \in [p^{\text{max}}m/M]_{0\le m\le M} $,  where $p^{\text{max}}_J$ is the   maximum transmit power of the smart interferer.  In addition, the user quantizes the
received SINR into one of $L$ levels as well, and sends it
back  to the HAP.
It is assumed that the user and the smart interferer
are intelligent agents. 
 We consider that both the user and the smart interferer can adopt a mixed strategy, which can be interpreted as a probability distribution for the strategy set of each player. Let $\bm{\pi}_u$ and $\bm{\pi}_J$ denote the mixed strategy of the user and the smart interferer, respectively. Applying Q-Learning, such distributions can be obtained.

 \subsection{Q-Learning}
 The  Q-learning algorithm can be used for solving the proposed game in a repeated manner over the time. In order to solve the game, we can deploy Q-learning without knowing the state of the system, which is the system SINR. The state of the system at time $n$  can be represented as $s^n={[\textrm{SINR}^{n-1}_l]}_{1\le l\le L}  \in ~ \mathcal{L}$, where $\mathcal{L}$ is the set of all possible SINR values.

 \begin{algorithm}[h]

	1: $\alpha, \beta, \gamma$, $Q(s, p) = 0, V(s) = 0, \forall s, \forall~p$, $ s_0=0$ , $\pi=\frac{1}{|\Phi|}$ 
	
	 \For{ $i = 1, 2, ..., I$ }{  Generate a realization of the similar environment.
     
        \For{ $n = 1, 2, ..., N$ }{
     Select $\phi_n \in \Phi$ randomly.
     
     Observe the utility $U_u$ and the $\textrm{SINR}_n$ of the system.
     
     $s_{n+1}=\textrm{SINR}_n$
     
     $ Q^*(s_n,\phi_n)\leftarrow (1-\beta)Q^*(s_n,\phi_n)+\beta(U_u(s_n,\phi_n)$ 
     
    $~~~~~~~~~~~+\gamma V^*{(s_{n+1}})).
$
		 
	$	V^*(s_n)=\max_{\phi \in \Phi} ~ Q^*(s_n, \phi_n)$

     } }

	\caption{{Hotbooting process}}
	\label{alg}
\end{algorithm}
 The learning rate denoted by $\beta \in (0,1]$ represents  the weight corresponding to the current executed action, and the discounting factor $\gamma \in [0,1]$ is used to encourage the user to prefer maximizing the closer utility rather than the future ones. The Q-function denoted by $Q(s,\phi_n)$ for state $s$ and action $\phi_n$ can be updated in each time slot $n$ based on the iterative Bellman equation given by
  \begin{equation}\label{qq22}
 Q(s_n,\phi_n)\leftarrow (1-\beta)Q(s_n,\phi_n)+\beta(U_u(s_n,\phi_n)+\gamma V{(s_{n+1}})).
 \end{equation}
 At each time slot $n$, the value function denoted by $V(s_n)$ can be obtained by maximizing the Q-function $Q(s,\phi_n)$ over the set of user actions, i.e., $\Phi$, which is given by
  \begin{equation}\label{qq11}
V(s_n)=\max_{\phi \in \Phi} ~ Q(s_n, \phi_n).
 \end{equation}
 Let us denote the mixed-strategy probability  with $\pi(s,\phi_n)$  which can be updated using the $\epsilon$-greedy algorithm in order to adjust the tradeoff between exploration and exploitation. The updating rule for this  probability  can be written as
 \begin{equation}\label{qqee}
 \pi(\phi_n=\tilde{\phi}_n)=\begin{cases}
1-\epsilon, & \text{ if } \tilde{\phi}_n= \textrm{argmax}_{\phi_n \in |\Phi|} ~Q(s_n,\phi_n)) \\ 
\frac{\epsilon}{|\Phi|-1}, & \text{ if } \textrm{otherwise} 
\end{cases}.
 \end{equation}
 Without knowing  the instantaneous state of the system, the user can learn the best transmission strategy based on the transmission history of the smart interferer which enables the user to improve its long-term efficiency. The Q-learning algorithm suffers from the slow convergence. However, the convergence speed can be improved using the so-called hotbooting Q-learning.

 \subsection{Hotbooting Q-Learning}
 Since the initial values of the Q-tables are all zeros, in order to get the optimal policy, it wastes the time due to random exploration \cite{kaelbling1996reinforcement,xiao2018reinforcement}. 
  Thus,  a hotbooting approach is proposed in order to expedite the convergence. In this approach, the Q-values are initialized using a training dataset which is already collected from the realization of the same environment.  In particular, the Q-values and the mixed strategy probabilities are  initialized using $Q^*$  and $\pi^*$  obtained in advance from the realizations of the similar  
 scenarios according to Algorithm~1. As a result, the hotbooting Q-learning reduces the random exploration in the beginning iterations and can speed up the learning process and convergence.  The overall Hotbooting Q-learning algorithm is presented in Algorithm 2.

  \begin{algorithm}[h]
	
	\DontPrintSemicolon
	$\alpha, \beta, \gamma$, $Q(s, p) = Q^*(s, p), V(s) = V^*{(s)}, \forall s, \forall~p$, $\pi=\pi^*$.

	 \For{ $n = 1, 2, ...$ }{       
        {
     Select $\phi_n \in \Phi$ based on \eqref{qqee}.
     
     Observe the utility $U_U$ and the $\textrm{SINR}_n$ of the system.
     
     $s_{n+1}=\textrm{SINR}_n$.
     
     Update  $ Q(s_n,\phi_n)$ using \eqref{qq22}.
		 
	Update  $	V^*(s_n)$ using \eqref{qq11}.

     } }

	\caption{{Hotbooting Q-Learning Algorithm}}
	\label{alg}
\end{algorithm}
\section{Simulation Results}\label{sim}

\noindent In this section, the performance of the proposed scheme is evaluated via simulations. In the simulations, we assume that both the user and the smart interferer are stationary and can observe the communication. The high power HAP transmits the signal at $2.1$ GHz with $1$ MHz  bandwidth \cite{7981380}. The transmit power of the un-modulated signal is assumed as $43$dBm. The backscattering time cost and smart interferer transmit power cost are assumed as   $C_\phi=C_J=0.1$. The energy harvesting efficiency, and reflection coefficients are assumed as $\delta=0.5$, $\Gamma_0=1$, and $\Gamma_1=-1$, respectively. The maximum transmit power of the smart interferer is assumed as $p_J^{\textrm{max}}=30$ dBm. It is further assumed $G_t=G_i=6$ dBi,
$G_r=1.8$ dBi \cite{assimonis2016sensitive}. The number of the actions for both the user and the smart interferer is assumed as $K=M=10$. For the benchmark, we consider random allocation and fixed strategy selection for the user. For the random case, the user selects its strategy randomly  at each time slot, while for the fixed strategy it considers equal time portion for backscattering and energy harvesting. The distance between the user and HAP is assumed as $d_\textrm{HAP}=15$m \cite{7981380} while that between   the user and the smart interferer is assumed as $d_J=20$m. In the $\epsilon$-greedy algorithm, we have assumed  $\epsilon=0.05$.

	\begin{figure}[!t]
	\includegraphics[width=8.2cm,height=6.7cm]{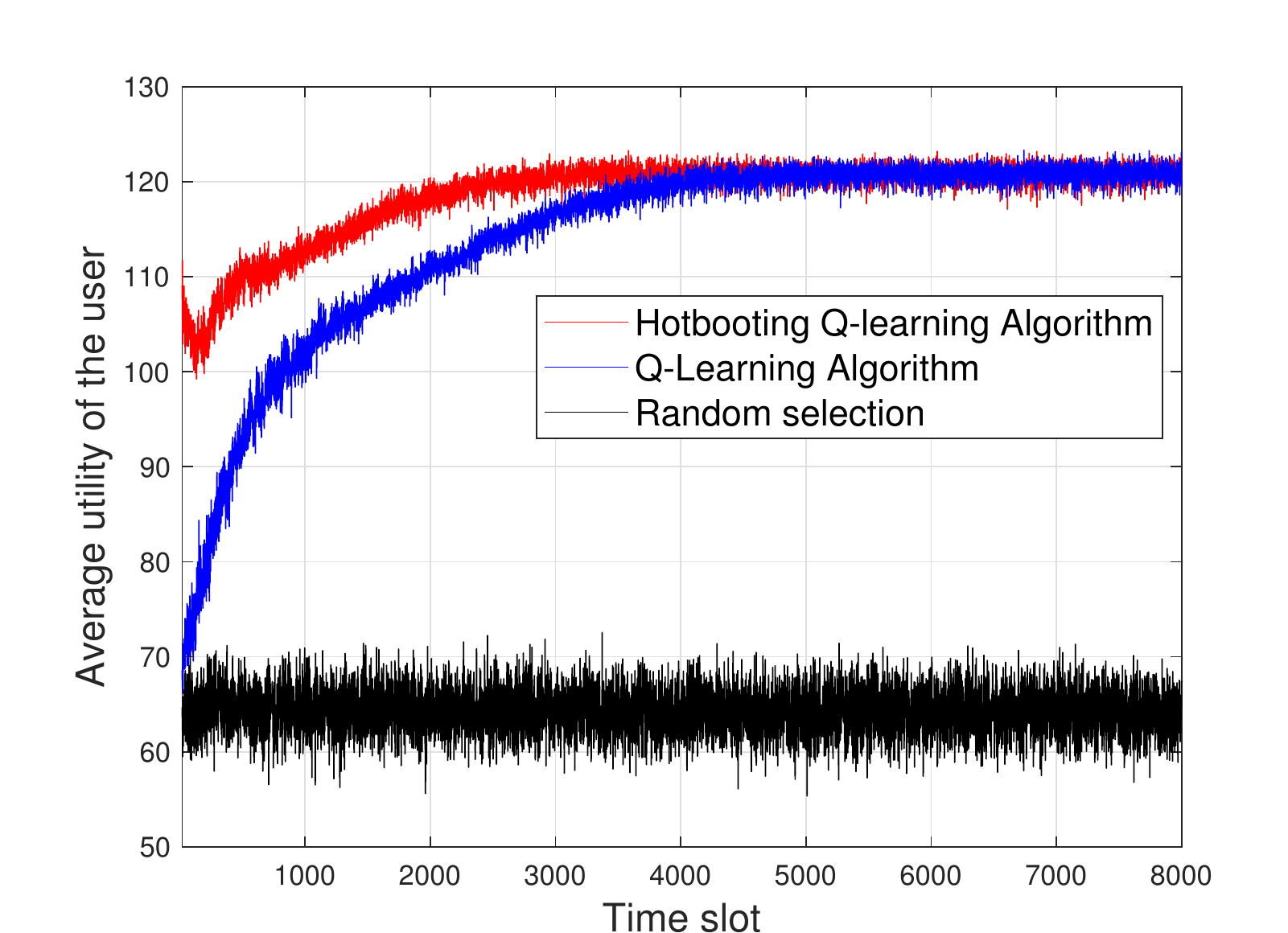}
	\centering
	\caption{Utility of the user vs. time slot to show the convergence performance of the proposed schemes in the dynamic game.}
	\label{conv}
\end{figure}

\begin {table}[h]
\caption {Simulation parameters.} \label{tab1} 
\vspace{-3mm}
\begin{center}
\begin{tabular}{ |c|c|c|c| }
\hline
\textbf{Parameter} & \textbf{Description} & \textbf{Value} \\
\hline
\footnotesize{$f_c$} & \footnotesize{Carrier frequency of HAP} & \footnotesize{2.4 GHz} \\ \hline
\footnotesize{$W$} & \footnotesize{Bandwidth} & \footnotesize{1 MHz} \\ \hline
\footnotesize{$C_\phi$} & \footnotesize{Backscattering time cost } & \footnotesize{0.1} \\ \hline \footnotesize{$C_J$} & \footnotesize{Interferer  power cost} & \footnotesize{0.1} \\ \hline \footnotesize{$\delta$} & \footnotesize{Energy harvesting efficiency} & \footnotesize{0.5} \\ \hline \footnotesize{$\Gamma_0, \Gamma_1$}
& \footnotesize{Reflection coefficients} & \footnotesize{1, -1} \\ \hline \footnotesize{$p_J^{\textrm{max}}$}
 & \footnotesize{Max. interferer   power } & \footnotesize{30 dBm} \\ \hline \footnotesize{$G^T$, $G^R$, $G^J$} & \footnotesize{Antenna gains} & \footnotesize{6dBi, 6dBi, 1.8dBi} \\ \hline 

\end{tabular}
\end{center}
\end {table}
As shown in Fig. \ref{conv}, the utility of the user increases with time and the Q-learning based dynamic solution can considerably improve the performance in comparison to the random scheme. Moreover, the convergence speed of the algorithm can be  further improved using the hotbooting Q-learning. As shown in Fig. 2, the convergence speed of the hotbooting Q-learning is  about $31\%$  better than that    of the Q-learning algorithm. 
Fig. \ref{dist} shows the average utility of the user at varying distances between the HAP and the user. Q-Learning and hotbooting Q-Learning perform almost the same in terms of the average convergence value and both outperform the random and fixed schemes. The utility decreases dramatically when the distance between the user and HAP increases. 
The utility of the user versus the cost of backscattering time is shown in Fig.~\ref{cost}. As can be seen, by increasing the backscattering time cost, the utility of the user decreases.

\begin{figure}[!t]
	\includegraphics[width=8.2cm,height=6.7cm]{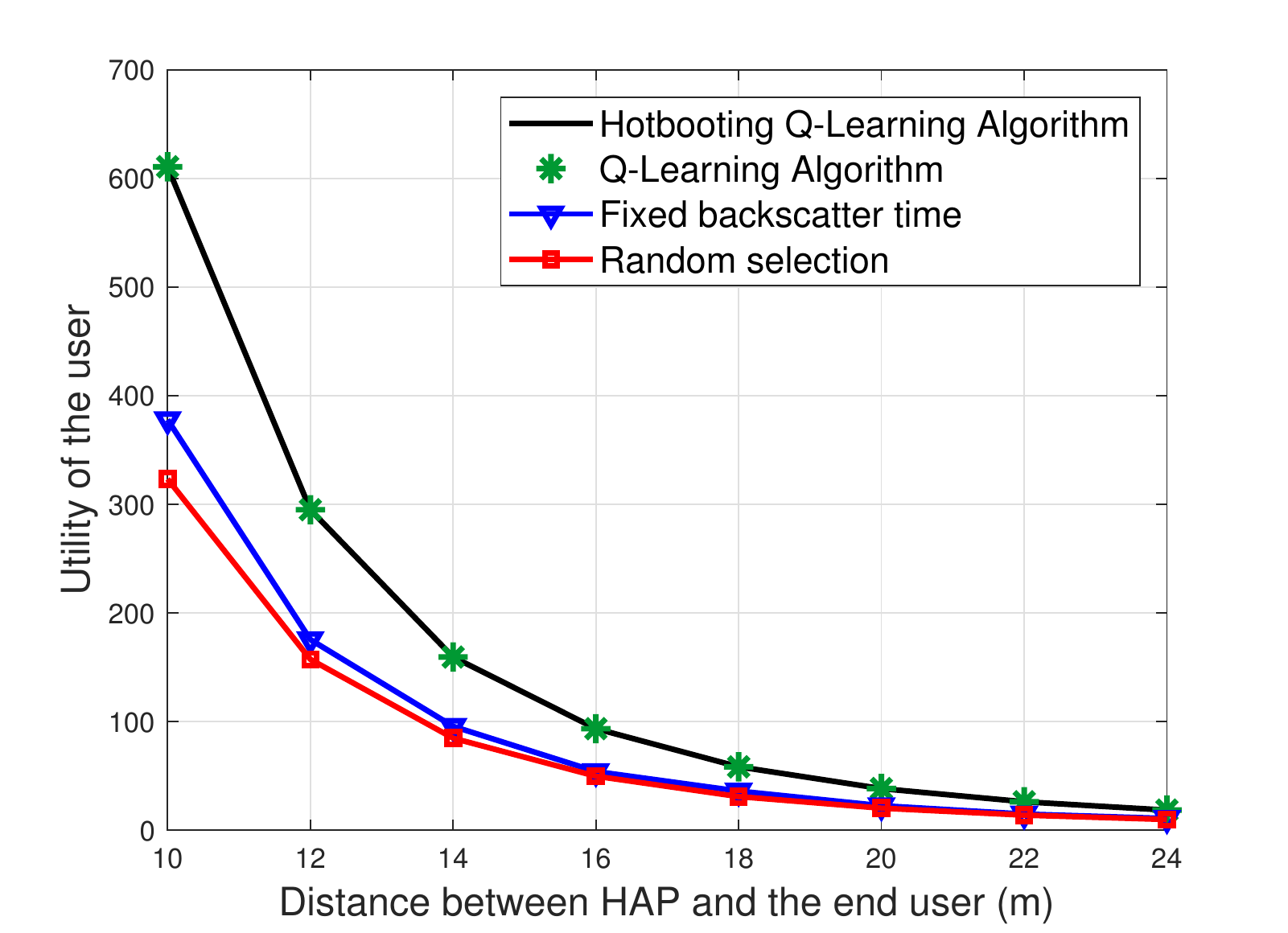}
	\centering
	\caption{Average utility of the user vs. the distance between the user and HAP.}
	\label{dist}
\end{figure}

\vspace{-3mm}
\section{Conclusion and Future work}\label{con}

In this paper, we have formulated an interference avoidance game for an ambient backscatter communication system in which the user determines its portion of time for backscattering and the smart interferer attempts to choose its jamming power to interrupt the transmission. Due to lack of information about the system state and the smart interferer transmission power, the interaction between the user and the interferer is  modeled as a dynamic game and the optimal strategies are  obtained using both analytical results and Q-learning. In order to expedite the convergence, we have adopted a hotbooting Q-learning that can learn faster. Based on the simulation results, it can be observed that the proposed approach can significantly improve the user utility in comparison to the random and fixed schemes. Considering other parameters of the user and the smart interferer such as mobility and their location as variables, the equilibrium of the  game can be further optimized. However, since the number of the variables in the Q-table get increased, deep Q-learning can be employed to approximate the Q-table  and reduce the complexity of the  learning game; this is left as a future work.
\begin{figure}[!t]
	\includegraphics[width=8.2cm,height=6.7cm]{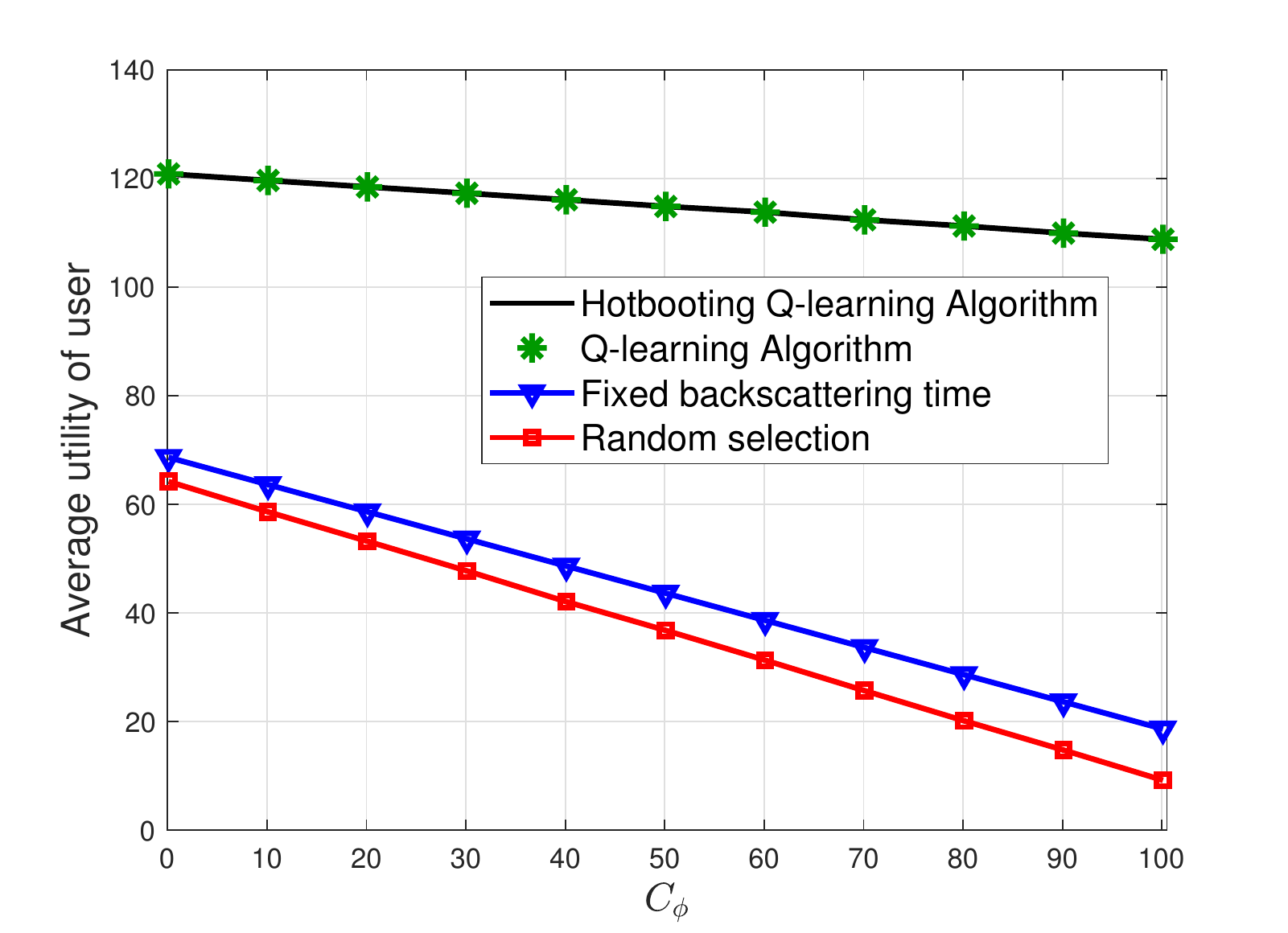}
	\centering
	\caption{Average utility of the user vs. cost for backscattering time.}
	\label{cost}
\end{figure}

	

\bibliographystyle{ieeetr}
\bibliography{refs}

\end{document}